\newtheorem{theorem}{Theorem}
\tikzstyle{textoverlaywhite} = [
\newcommand\drawlight[1]{
	\fill [black!20!yellow] (#1) circle [radius=0.6pt];
	\foreach \x in {0,20,...,350}
		\draw [thin, black!20!yellow] ($(#1) + (\x:0.04)$) -- +(\x:0.1);
}
\newcommand\draweyex[1]{
	\def\eyeborder{($(#1)+(-7pt,-1pt)$) to[out=50,in=170] +(7pt,3pt) to[out=290,in=60] +(-0.4pt,-4pt) to[out=180,in=350] +(-6.6pt,1pt) };
	
	\begin{scope}
		\clip\eyeborder;
		\fill [green!40!blue] (#1) circle [radius=2.8pt];
		\foreach \a in {1,24,76,154,199,260,302}
			\draw [green!57!blue] ($(#1) + (\a:0.7pt)$) -- +(\a:2pt);
		\foreach \a in {53,111,177,219,341}
			\draw [green!45!blue] ($(#1) + (\a:0.7pt)$) -- +(\a:2pt);
		\fill [black] (#1) circle [radius=0.8pt];
	\end{scope}
	
	\draw[black!75,-{Triangle Cap[length=0.5pt]}] ($(#1)+(-7pt,-1pt)$) to[out=50,in=170] +(7.2pt,2.95pt) to[out=350,in=190] +(1.1pt,-0.05pt);
	\draw[black!75, line cap=round] ($(#1)+(-0.3pt,-2pt)$) to[out=180,in=350] +(-6.7pt,1pt);
	%\draw[black!75,very thin] \eyeborder;
}
\pgfplotsset{convergenceplot/.style = {
	width = 5.2cm,
	height = 0.316\linewidth,
	name = spheres,
	anchor = south west,
	y label style = {at={(axis description cs:0.125,.5)},anchor=south},
	y tick label style = {rotate=90, anchor=south},
	xticklabels = {,,},
	xlabel style = {at={(axis description cs:0.5,0.02)},anchor=north},
	legend cell align = left,
	legend style = {font=\scriptsize, at={(0.94, 0.93)}},
}}
\pgfplotsset{convergenceplotsup/.style = {
	width = 5cm,
	height = 0.31\linewidth,
	name = spheres,
	anchor = south west,
	y label style = {at={(axis description cs:0.125,.5)},anchor=south},
	y tick label style = {anchor=east, xshift=3.3cm},
	xlabel style = {at={(axis description cs:0.5,0.02)},anchor=north},
	legend cell align = left,
	legend style = {font=\scriptsize, at={(0.77, 0.92)}},
}}
\newcommand\mvec[1]{\mathbf{#1}}
\newcommand\var[1]{\text{V}\!\left[ #1 \right]}
\newcommand\expect[1]{\text{E}\left[ #1 \right]}
\newcommand\diri[1]{\overrightarrow{\mvec{w}_{#1}}}
\newcommand\dirr[1]{\overleftarrow{\mvec{w}_{#1}}}
\newcommand\pdfi[1]{\overrightarrow{p_{#1}}}
\newcommand\pdfr[1]{\overleftarrow{p_{#1}}}
\newcommand\btf{ {\rho_{\!\!\perp\!}} }
\newcommand\btfi[1]{\overrightarrow{\btf_{#1}}}
\newcommand\btfr[1]{\overleftarrow{\btf_{#1}}}
\newcommand\raiseone[2]{\raisebox{-0.07ex}{$#1 \mkern-0.3mu 1$}}
\newcommand\scrOne{\mathpalette\raiseone\relax}
\newcommand\sm{\mathchoice%
	{\scalebox{0.6}[1.0]{$\displaystyle\mkern+1.0mu-\mkern-1.0mu$}}%
	{\scalebox{0.6}[1.0]{$\textstyle\mkern+1.0mu-\mkern-1.0mu$}}%
	{\scalebox{0.6}[1.0]{$\scriptstyle\mkern+1.0mu-\mkern-1.0mu$}}%
	{\scalebox{0.6}[1.0]{$\scriptscriptstyle\mkern+1.0mu-\mkern-1.0mu$}}}
\newcommand\scrP{+}%{\mathchoice%
\newcommand\subone{\sm\scrOne}
\newcommand\subtwo{\sm2}
\newcommand\addone{\scrP\scrOne}
\title{Path Throughput Importance Weights}
\author{Jendersie, Johannes\\{\normalsize TU Clausthal, Germany}}
\begin{document}

\maketitle

%-------------------------------------------------------------------------
\begin{abstract}
\hrule\vspace*{4pt}

  Many Monte Carlo light transport simulations use multiple importance sampling (MIS) to weight between different path sampling strategies.
  We propose to use the path throughput to compute the MIS weights instead of the commonly used probability density per area measure.
  This new formulation is equivalent to the previous approach and results in the same weights as well as implementation.
  However, it is more intuitive and can help in understanding the effects of modifications to the weight function.
  We show some examples of required modifications which are often neglected in implementations.
  Also, our new perspective might help to derive MIS strategies for new samplers in the future.

\vspace*{2pt}\hrule
\end{abstract}  

%-------------------------------------------------------------------------

\section{Introduction}

Beginning with \textit{Path Tracing} \cite{kajiya_rendering_1986} there are different solutions to the light transport simulation problem.
In all cases the integral equation (\textit{Rendering Equation})
\begin{equation*}
	L(\mvec{x}_i, \diri{i}) = \int_\Omega L(\mvec{x}_{i\addone}, \dirr{i}) \rho(\mvec{x}_i, \diri{i}, \dirr{i})\langle\mvec{n}_i,\dirr{i}\rangle \text{d} \dirr{i}
\end{equation*}
is solved numerically by sampling. For the notation please refer to table 1.
Dependent on the direction of tracing we talk from light transport (paths starting at the light source/photons) or importance transport (paths starting at the observer).
%TODO: Check common usage (Guess veach used this the other way around).
Both define a different sampler for the same paths.
Since they cover different light effects more successfully, the combination to \textit{Bidirectional Path Tracing} (BPT) by Veach \cite{veach_bidirectional_1995} gives a more robust solution.

The key idea in BPT is to weight each path from each sampler using \textit{Multiple Importance Sampling} (MIS).
The weights form a partition of unity, such that the weighted sum of all samplers is again an unbiased estimate of the \textit{Rendering Equation}.
The goal of that weights is to find a minimal variance solution.
I.e. if one of the samplers has a lower variance than others, it should be preferred, otherwise an average of multiple equal samplers will also result in a lower variance due to higher sample count.
The state of the art weight function is called the \textit{Balance Heuristic} (introduced by Veach \cite{veach_bidirectional_1995}) and is explained in Section \ref{sec:ppiw}.

We introduce a new way to think of the \textit{Balance Heuristic} in Section \ref{sec:ptiw}.
Our approach is to use the path throughput (the sampled quantity) instead of probabilities.
This new perspective helps in understanding the implications of modifications to the renderer and allow more intuitive extensions towards other samplers (See Section \ref{sec:conseqences}).
For an already existing and correct implementation there is nothing to be changed.

An example of such an extension is the combination of Photon Mapping \cite{jensen_global_1996} with BPT.
In photon mapping two sub-paths are merged by searching end points in a local neighborhood at one path end, instead of connecting sub-paths only.
The difficulty here is to find a compatible probability description for both methods to be able to compute the MIS.
The solution was discovered by \cite{georgiev_light_2012} and \cite{hachisuka_path_2012} in parallel.
Using our perspective the solutions becomes trivial.

A further family of samplers are Marcov Chain Monte Carlo methods which conditionally exchange sub-paths (light, importance or both) to sample an arbitrary target function.
In \cite{sik_robust_2016} MCMC was combined with BPT using MIS, too.
For one chain their approach uses the unmodified path sampling probability, regardless of its optimality, since computing the true probability is unfeasible.
Our approach suggests that parts of the throughput calculation (like acceptance probability) should be included into the MIS computation.
\begin{table}
	\def\arraystretch{1.2}
	\begin{tabular}{|p{0.08\linewidth}p{0.82\linewidth}|}
		\hline
		$\mvec{x}_i$ & A path vertex with index i; indices are ascending and start with 0 at the observer\\
		$\mathcal{X}_a$ & A path $\mvec{x}_0, \mvec{x}_1, \dots, \mvec{x}_{s+t}$ from sampler $a$ with $s$ vertices on the view sub-path and $t$ vertices on the light sub-path\\
		$a, b$ & Indices used to depict different samplers\\
		$\mvec{n}_i$ & The surface normal at vertex $\mvec{x}_i$ with $\Vert\mvec{n}_i\Vert=1$\\
		$\diri{i}$ & A direction from $\mvec{x}_{i\subone}$ to $\mvec{x}_i$ with $\Vert\diri{i}\Vert=1$\\
		$\dirr{i}$ & A direction from $\mvec{x}_{i\addone}$ to $\mvec{x}_i$ with $\Vert\dirr{i}\Vert=1$\\
		$\pdfi{i}, \pdfr{i}$ & Sampling PDF at vertex $\mvec{x}_i$ in importance transport direction $\pdfi{i} = p(\mvec{x}_i, \diri{i}, \dirr{i})$ and light transport direction  $\pdfr{i} = p(\mvec{x}_i, \dirr{i}, \diri{i})$\\
		$\rho_i$ & The bidirectional scattering or reflectance distribution function (BxDF); Same in both transport directions due to reciprocity\\
		$\langle\cdot,\cdot\rangle$ & Scalar product of two vectors (equals the $\cos\theta$ between the two vectors if both are normalized)\\
		$S(\mathcal{X})$ & Sampled value (radiance) of a path $\mathcal{X}$\\
		\hline
	\end{tabular}
\end{table}

\section{Path Probability Importance Weights}
\label{sec:ppiw}

Given the \textit{Probability Density Function}s (PDFs) $p_a$ of a sampler and the number of samples $n_a$ drawn from this sampler, the \textit{Balance Heuristic} is
\begin{equation}
	w_a = \frac{n_a p_a}{\sum_b n_b p_b}
	= \frac{1}{1+\sum_{b\neq a} \frac{n_b p_b}{n_a p_a}}.\label{eq:weight}
\end{equation}
According to Elvira et al. \cite{elvira_generalized_2017} it is the best known strategy for sampling a mixture of PDFs.
Also, Veach \cite{veach_optimally_1995} states it is the optimal choice, if sampling PDF mixtures with random decisions between PDFs (one-sample model).
To obtain a variance optimal combination, it is also necessary to choose the number of samples $n_a$ optimally (multi-sample model) and then use a weighting without the $n_a$ \cite{sbert_variance_2016}.
The second form of Eq. \eqref{eq:weight} is used in practice to avoid the computation the numerically challenging probability sums.

However, in light transport simulation we do not know the PDF $p_b$ of all possible samplers when sampling a vertex of a path.
The difficulty is that paths from the adjoint quantity (importance $\leftrightarrow$ light) depend on the scene globally.
I.e. the PDF of paths which randomly reach the current vertex is not given by local properties.

A solution was given by Veach \cite{veach_bidirectional_1995} who used the probability density per unit area instead.
For each segment of the path this measure is
\begin{equation}
	p(\mvec{x}_i \rightarrow \mvec{x}_{i\addone}) = \frac{\pdfi{i} \cdot \vert\langle\mvec{n}_{i\addone},\dirr{i}\rangle\vert}{\lVert\mvec{x}_i - \mvec{x}_{i\addone}\rVert^2} \label{eq:psegment}
\end{equation}
where $\pdfi{i}$ is the sampling PDF at the source vertex, $\langle\mvec{n}_{i\addone},\dirr{i}\rangle$ is the $\cos\theta$ at the target vertex and $\lVert\mvec{x}_i - \mvec{x}_{i\addone}\rVert^2$ is the squared distance between the two surfaces.
Analogously, the probability in the inverse direction $p(\mvec{x}_i \leftarrow \mvec{x}_{i\addone})$ is defined by inverting all directional sizes on the right side and replacing the indices of $\pdfi{}$ and $\mvec{n}$.

The probability of a path $\mathcal{X}$ is given by the product of all its segment probabilities and the probabilities to sample the first and last vertex.
\begin{equation}
	p(\mathcal{X}) = p(\mvec{x}_0) \cdot \!\prod_{i=0}^{s\subtwo} p(\mvec{x}_i \rightarrow \mvec{x}_{i\addone}) \cdot \!\!\!\!\prod_{i=s}^{s+t\subtwo}\!\! p(\mvec{x}_i \leftarrow \mvec{x}_{i\addone}) \cdot p(\mvec{x}_{s+t\subone}).\label{eq:ppath}
\end{equation}

Whenever two sub-paths, one from the observer and one from the light source, are connected, the probability measure for that sampler is the product of the two parts as defined by Eq. \eqref{eq:ppath}.
The measures for all other possible samplers are obtained by replacing forward and backward direction for the other path segments recursively.
In the unidirectional case either the left or the right half of the product vanishes, including the sampling probability of the end vertex.

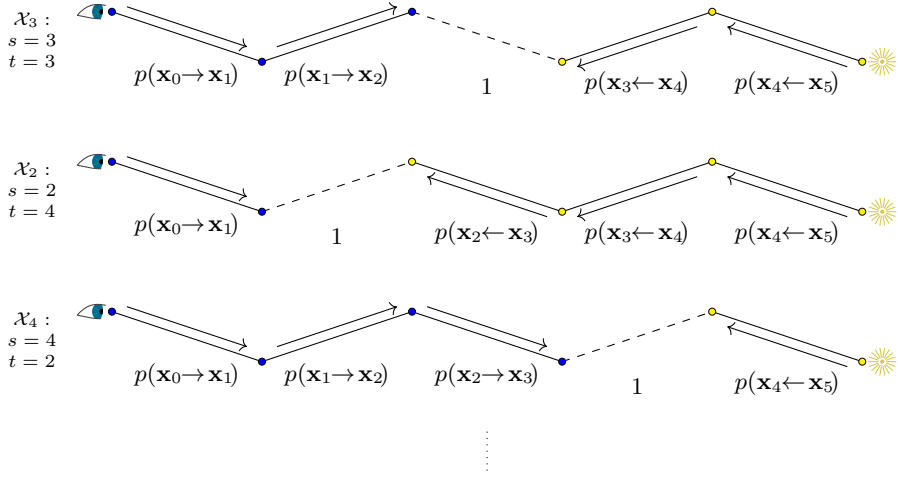
\begin{figure} \begin{center}
\begin{tikzpicture}[scale=1.33]
	\small
	
	\drawlight{4.7,-0.5};
	\draweyex{-3.1,0.0};
	\node[align=left] at (-3.8,-0.3) {\scriptsize $\begin{matrix} \mathcal{X}_3 :\\s=3\\t=3 \end{matrix}$};
	\draw (-3.0,0.0) -- (-1.5, -0.5) -- (0.0, 0.0);
	\draw [dashed] (0.0,0.0) -- (1.5, -0.5);
	\draw (1.5, -0.5) -- (3.0, 0.0) -- (4.5, -0.5);
	
	\draw[fill=blue,ultra thin] (-3, 0) circle [radius=1pt];
	\draw[fill=blue,ultra thin] (-1.5, -0.5) circle [radius=1pt];
	\draw[fill=blue,ultra thin] (0, 0) circle [radius=1pt];
	\draw[fill=yellow,ultra thin] (1.5, -0.5) circle [radius=1pt];
	\draw[fill=yellow,ultra thin] (3, 0) circle [radius=1pt];
	\draw[fill=yellow,ultra thin] (4.5, -0.5) circle [radius=1pt];

	\draw [->] (-2.85, 0.05) -- (-1.65, -0.35) node[pos=0.5, anchor=north, yshift=-10.5] {$p(\mvec{x}_{0}\!\!\rightarrow\!\mvec{x}_{1}\!)$};
	\draw [->] (-1.35, -0.35) -- (-0.15, 0.05) node[pos=0.5, anchor=north, yshift=-10.5] {$p(\mvec{x}_{1}\!\!\rightarrow\!\mvec{x}_{2}\!)$};
	\draw [->] (2.85, -0.15) -- (1.65, -0.55) node[pos=0.5, anchor=north, yshift=-6.0] {$p(\mvec{x}_{3}\!\!\leftarrow\!\mvec{x}_{4}\!)$};
	\draw [->] (4.35, -0.55) -- (3.15, -0.15) node[pos=0.5, anchor=north, yshift=-6.0] {$p(\mvec{x}_{4}\!\!\leftarrow\!\mvec{x}_{5}\!)$};
	\node at (0.75, -0.75) {1};
	
	\begin{scope}[yshift=-1.5cm]
		\drawlight{4.7,-0.5};
		\draweyex{-3.1,0.0};
		\node[align=left] at (-3.8,-0.3) {\scriptsize $\begin{matrix} \mathcal{X}_2 :\\s=2\\t=4 \end{matrix}$};
		\draw (-3.0,0.0) -- (-1.5, -0.5);
		\draw [dashed] (-1.5, -0.5) -- (0.0,0.0);
		\draw (0.0,0.0) -- (1.5, -0.5) -- (3.0, 0.0) -- (4.5, -0.5);
		
		\draw[fill=blue,ultra thin] (-3, 0) circle [radius=1pt];
		\draw[fill=blue,ultra thin] (-1.5, -0.5) circle [radius=1pt];
		\draw[fill=yellow,ultra thin] (0, 0) circle [radius=1pt];
		\draw[fill=yellow,ultra thin] (1.5, -0.5) circle [radius=1pt];
		\draw[fill=yellow,ultra thin] (3, 0) circle [radius=1pt];
		\draw[fill=yellow,ultra thin] (4.5, -0.5) circle [radius=1pt];
		
		\draw [->] (-2.85, 0.05) -- (-1.65, -0.35) node[pos=0.5, anchor=north, yshift=-10.5] {$p(\mvec{x}_{0}\!\!\rightarrow\!\mvec{x}_{1}\!)$};
		\draw [->] (1.35, -0.55) -- (0.15, -0.15) node[pos=0.5, anchor=north, yshift=-6.0] {$p(\mvec{x}_{2}\!\!\leftarrow\!\mvec{x}_{3}\!)$};
		\draw [->] (2.85, -0.15) -- (1.65, -0.55) node[pos=0.5, anchor=north, yshift=-6.0] { $p(\mvec{x}_{3}\!\!\leftarrow\!\mvec{x}_{4}\!)$};
		\draw [->] (4.35, -0.55) -- (3.15, -0.15) node[pos=0.5, anchor=north, yshift=-6.0] {$p(\mvec{x}_{4}\!\!\leftarrow\!\mvec{x}_{5}\!)$};
		\node at (-0.75, -0.75) {1};
	\end{scope}
	
	\begin{scope}[yshift=-3.0cm]
		\drawlight{4.7,-0.5};
		\draweyex{-3.1,0.0};
		\node[align=left] at (-3.8,-0.3) {\scriptsize $\begin{matrix} \mathcal{X}_4 :\\s=4\\t=2 \end{matrix}$};
		\draw (-3.0,0.0) -- (-1.5, -0.5) -- (0.0, 0.0) -- (1.5, -0.5);
		\draw [dashed] (1.5, -0.5) -- (3.0, 0.0);
		\draw (3.0, 0.0) -- (4.5, -0.5);
		
		\draw[fill=blue,ultra thin] (-3, 0) circle [radius=1pt];
		\draw[fill=blue,ultra thin] (-1.5, -0.5) circle [radius=1pt];
		\draw[fill=blue,ultra thin] (0, 0) circle [radius=1pt];
		\draw[fill=blue,ultra thin] (1.5, -0.5) circle [radius=1pt];
		\draw[fill=yellow,ultra thin] (3, 0) circle [radius=1pt];
		\draw[fill=yellow,ultra thin] (4.5, -0.5) circle [radius=1pt];
		
		\draw [->] (-2.85, 0.05) -- (-1.65, -0.35) node[pos=0.5, anchor=north, yshift=-10.5] {$p(\mvec{x}_{0}\!\!\rightarrow\!\mvec{x}_{1}\!)$};
		\draw [->] (-1.35, -0.35) -- (-0.15, 0.05) node[pos=0.5, anchor=north, yshift=-10.5] {$p(\mvec{x}_{1}\!\!\rightarrow\!\mvec{x}_{2}\!)$};
		\draw [->] (0.15, 0.05) -- (1.35, -0.35) node[pos=0.5, anchor=north, yshift=-10.5] {$p(\mvec{x}_{2}\!\!\rightarrow\!\mvec{x}_{3}\!)$};
		\draw [->] (4.35, -0.55) -- (3.15, -0.15) node[pos=0.5, anchor=north, yshift=-6.0] {$p(\mvec{x}_{4}\!\!\leftarrow\!\mvec{x}_{5}\!)$};
		\node at (2.25, -0.75) {1};
	\end{scope}
	
	\draw [dotted] (0.75, -4.2) -- (0.75, -4.6);
\end{tikzpicture}
\end{center}

\caption{Path probabilities for different examples of sampling possibilities for the same path.}
\label{fig:pathprob} \end{figure}

Figure \ref{fig:pathprob} shows some of the possibilities for sampling a certain path.
For example, to get the probability for the second path from the first
\begin{equation*}
	\frac{p(\mvec{x}_{2}\leftarrow\mvec{x}_{3})}{p(\mvec{x}_{1}\rightarrow\mvec{x}_{2})}
\end{equation*}
must be multiplied.
Hence, the ratio between two path probabilities has the form
\begin{equation}
	\frac{p(\mathcal{X}_b)}{p(\mathcal{X}_a)} = \prod_{k=b}^{a\subone} \frac{p(\mvec{x}_k\leftarrow\mvec{x}_{k\addone})}{p(\mvec{x}_{k\subone}\rightarrow\mvec{x}_k)}\label{eq:propRatio}
\end{equation}
where $k$ ranges from the end vertex of the connection $s_b=b$ in $\mathcal{X}_b$ to the start vertex of the connection $(s_a=a)\subone$ in $\mathcal{X}_a$.
Here, w.l.o.g. the connection appears earlier on the path in $\mathcal{X}_b$ than in $\mathcal{X}_a$.
Otherwise, the fraction must be inverted.

\section{Path Throughput Importance Weights}
\label{sec:ptiw}

To calculate the sample value (radiance throughput) $S(\mathcal{X}_a)$ of a sampled path we have
\begin{equation}
	S(\mathcal{X}_a) = 
	\frac{W\!(\mvec{x}_0)}{p(\mvec{x}_0)}
	\cdot \prod_{i=1}^{s\subtwo} \frac{\overrightarrow{\btf_i}}{\pdfi{i}}
	\cdot \frac{\overrightarrow{\btf_{s\subone}}\cdot\overleftarrow{\btf_s}} {\lVert\mvec{x}_s-\mvec{x}_{s\subone}\rVert^2}
	\cdot \!\!\!\prod_{i=s+1}^{s+t\subtwo} \frac{\overleftarrow{\btf_i}}{\pdfr{i}}
	\cdot \frac{L(\mvec{x}_{s+t\subone})}{p(\mvec{x}_{s+t\subone})} \label{eq:mu}.
\end{equation}
It depends on the sensor weight $W$, the emitted radiance $L$, multiple sampling events in the form $\btf/p$ and the transport evaluation of the connection.
Here $\btf$ is the BxDF $\rho$ multiplied with the outgoing cosine $\overrightarrow{\btf_i} = \rho_i \cdot\vert\langle\mvec{n}_{i},\dirr{i}\rangle\vert$ and $\overleftarrow{\btf_i} = \rho_i \cdot\vert\langle\mvec{n}_{i},\diri{i}\rangle\vert$.

\begin{comment}
The true radiance estimate for a path $\mathcal{X}$ is then
\begin{equation}
	E[\mu(\mathcal{X})] = \sum_{a} p(a)\mu(\mathcal{X}_a),
\end{equation}
where $p(a)$ is the probability (not a density) to produce the path with sampler $a$ where $\sum_a p(a) = 1$.
%A possibility to compute $p(a)$ is the \textit{Balance Heuristic} Eq. \eqref{eq:weight} (with other words $p(a) = w_a$).
%The probability $p(a)$ is implicitly defined by all allowed sampling approaches and evolves over time (with increasing sample count).
\end{comment}

The true radiance estimate of a pixel $L(\mvec{x}_0,\diri{0})$ is then
\begin{equation*}
	E[S(\mathcal{X})] = \lim\limits_{N\rightarrow\infty} \frac{1}{\sum_{k=1}^N w_k} \sum_{k=1}^N w_k S(\mathcal{X}_k),
\end{equation*}
where $N$ is the total number of samples (increases with iterations) and $\mathcal{X}_k$ are different sampled paths.
The weight $w_k$ is artificially introduced and can be any weight like the \textit{Balance Heuristic} from Eq. \eqref{eq:weight}.

Our goal is to minimize the variance $\var{S(\mathcal{X})} = E[S(\mathcal{X})^2] - E[S(\mathcal{X})]^2$.
Since $E[S(\mathcal{X})]$ is the fixed (true) value we only need to minimize $E[S(\mathcal{X})^2]$.
Hence, preferring samplers with a small $S$ must lead to a smaller variance.
Therefore, we require
\begin{equation*}
	\hat{w}_a \propto S(\mathcal{X}_a)^{\sm1}.
\end{equation*}
This leads us to an equivalent formulation of the balance heuristic
\begin{equation}
	\hat{w}_a = \frac{n_a \cdot S(\mathcal{X}_a)^{\sm1}}{\sum_b n_b \cdot S(\mathcal{X}_b)^{\sm1}}
	= \frac{1}{1+\sum_{b\neq a} \frac{n_b S(\mathcal{X}_a)}{n_a S(\mathcal{X}_b)}}.\label{eq:ptiw}
\end{equation}

\begin{comment}
While many works (\cite{sbert_variance_2016, elvira_generalized_2017} and older) argue about the variance optimality of the balance heuristic given in Eq. \eqref{eq:weight}, none questions the use of Eq. \eqref{eq:ppath} in this context.
Since we try to minimize the variance of the throughput $\mu$ of sampled paths $\mathcal{X}_i$, it seems naturally to use
\begin{equation}
	\hat{w}_i \propto \mu(\mathcal{X}_i)^{\sm1}.\label{eq:ptiw1}
\end{equation}
This means that choosing the samplers with the smallest throughputs reduce the variance.
If we look at $\var{\mu} = E[\mu^2] - \expect{\mu}^2$, where $\expect{\mu}$ is the true value we want to estimate, the quantity we need to minimize is $E[\mu^2]$.
Hence, preferring samplers with a small $\mu$ must lead to a smaller variance.

This leads us to an equivalent formulation of the balance heuristic
\begin{equation*}
	\hat{w}_i = \frac{n_i \cdot \mu(\mathcal{X}_i)^{\sm1}}{\sum_j n_j \cdot \mu(\mathcal{X}_j)^{\sm1}}
	= \frac{1}{1+\sum_{j\neq i} \frac{n_j \mu(\mathcal{X}_i)}{n_i \mu(\mathcal{X}_j)}}.
\end{equation*}
\end{comment}

\subsection{Equivalence to Probability Weights}

\begin{theorem}
The balance heuristic $w_a$ using path probabilities Eq.\eqref{eq:weight} is equivalent to the heuristic $\hat{w}_a$ using inverse throughputs Eq.\eqref{eq:ptiw}.
\end{theorem}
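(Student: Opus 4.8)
\emph{Proof plan.} The right-hand sides of Eq.~\eqref{eq:weight} and Eq.~\eqref{eq:ptiw} already put both weights in the same ``inverse sum'' shape: $w_a = 1/(1+\sum_{b\neq a} n_b p_b/(n_a p_a))$ with $p_a = p(\mathcal{X}_a)$ the path probability of Eq.~\eqref{eq:ppath}, and $\hat w_a = 1/(1+\sum_{b\neq a} n_b S(\mathcal{X}_a)/(n_a S(\mathcal{X}_b)))$. It therefore suffices to match them summand by summand, i.e.\ to prove that for any two samplers $a$ and $b$ of one and the same path
\[
	\frac{p(\mathcal{X}_b)}{p(\mathcal{X}_a)} = \frac{S(\mathcal{X}_a)}{S(\mathcal{X}_b)} ,
\]
which is the same as saying that the product $p(\mathcal{X})\,S(\mathcal{X})$ does not depend on which sampler produced the path. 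The whole plan is then to compute both ratios and check that they are the same product. Exactly as in Eq.~\eqref{eq:propRatio}, I assume w.l.o.g.\ that the connection of $\mathcal{X}_b$ sits no later on the path than that of $\mathcal{X}_a$, i.e.\ $s_b = b \le a = s_a$ (otherwise swap the two samplers).

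For the probability side I would start from the ratio identity Eq.~\eqref{eq:propRatio} and substitute the per-segment measure Eq.~\eqref{eq:psegment} into every factor. Two things then happen. The inverse squared distances telescope: $\prod_{k=b}^{a\subone}\lVert\mvec{x}_{k\subone}-\mvec{x}_k\rVert^2/\lVert\mvec{x}_k-\mvec{x}_{k\addone}\rVert^2$ collapses to the single ratio $\lVert\mvec{x}_{b\subone}-\mvec{x}_b\rVert^2/\lVert\mvec{x}_{a\subone}-\mvec{x}_a\rVert^2$. And the surviving directional quantities leave $\pdfr{b\addone}\cdots\pdfr{a}$ over $\pdfi{b\subone}\cdots\pdfi{a\subtwo}$, times the target cosines $\prod_{k=b}^{a\subone}\lvert\langle\mvec{n}_k,\dirr{k}\rangle\rvert/\lvert\langle\mvec{n}_k,\diri{k}\rangle\rvert$ (here $\dirr{k}$ and $\diri{k}$ are antiparallel, so these cosines coincide once the absolute value is taken, exactly as Eq.~\eqref{eq:psegment} and $\btf$ both do).

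For the throughput side I would expand $S(\mathcal{X}_a)$ and $S(\mathcal{X}_b)$ from Eq.~\eqref{eq:mu}. The sensor and emitter end factors, and every per-vertex factor for a vertex outside the window $\{b\subone,\dots,a\}$, are identical in the two expansions and cancel. What remains in $S(\mathcal{X}_b)$ is its connection term $\overrightarrow{\btf_{b\subone}}\,\overleftarrow{\btf_b}/\lVert\mvec{x}_b-\mvec{x}_{b\subone}\rVert^2$ together with the light-path factors $\overleftarrow{\btf_i}/\pdfr{i}$ of the vertices whose sub-path assignment changes, whereas in $S(\mathcal{X}_a)$ those same vertices now carry importance-path factors $\overrightarrow{\btf_i}/\pdfi{i}$ and there is the new connection term $\overrightarrow{\btf_{a\subone}}\,\overleftarrow{\btf_a}/\lVert\mvec{x}_a-\mvec{x}_{a\subone}\rVert^2$. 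Using $\overrightarrow{\btf_i}=\rho_i\lvert\langle\mvec{n}_i,\dirr{i}\rangle\rvert$ and $\overleftarrow{\btf_i}=\rho_i\lvert\langle\mvec{n}_i,\diri{i}\rangle\rvert$, every $\rho_i$ cancels in a pair and each leftover $\btf$-ratio becomes the cosine ratio $\lvert\langle\mvec{n}_i,\dirr{i}\rangle\rvert/\lvert\langle\mvec{n}_i,\diri{i}\rangle\rvert$; the PDFs left over are precisely $\pdfr{b\addone}\cdots\pdfr{a}$ over $\pdfi{b\subone}\cdots\pdfi{a\subtwo}$, and the two connection distances give $\lVert\mvec{x}_b-\mvec{x}_{b\subone}\rVert^2/\lVert\mvec{x}_a-\mvec{x}_{a\subone}\rVert^2$. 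Term for term this is the same product I obtained for $p(\mathcal{X}_b)/p(\mathcal{X}_a)$, so the identity holds and hence $w_a = \hat w_a$.

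Most of this is pure index bookkeeping; the single step that really carries the argument is the telescoping of the squared distances in Eq.~\eqref{eq:propRatio}, because that is what turns the chain of per-segment geometry terms of the probability picture into the lone connection-segment geometry term of the throughput picture. The only situation needing a separate (and trivial) check is a connection incident to $\mvec{x}_0$ or $\mvec{x}_{s+t\subone}$ --- that is, a unidirectional sampler --- where one of the end factors $W\!(\mvec{x}_0)/p(\mvec{x}_0)$, $L(\mvec{x}_{s+t\subone})/p(\mvec{x}_{s+t\subone})$ takes the role of a vertex factor at that end and, correspondingly, the end-vertex probability drops out of Eq.~\eqref{eq:ppath}; the cancellation still goes through the same way.
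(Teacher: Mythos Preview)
Your proof takes essentially the same route as the paper: reduce $w_a=\hat w_a$ to the single identity $p(\mathcal{X}_b)/p(\mathcal{X}_a)=S(\mathcal{X}_a)/S(\mathcal{X}_b)$, expand both ratios from Eqs.~\eqref{eq:propRatio}--\eqref{eq:psegment} and Eq.~\eqref{eq:mu}, telescope the squared distances, cancel the reciprocal BxDFs $\rho_i$, and match the remaining cosine and PDF products term by term; the unidirectional edge case is handled the same way.

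One small slip worth fixing: in your parenthetical you assert that $\dirr{k}$ and $\diri{k}$ are antiparallel. They are not --- those are the two distinct incident directions at vertex $\mvec{x}_k$, and the cosine ratio $\lvert\langle\mvec{n}_k,\dirr{k}\rangle\rvert/\lvert\langle\mvec{n}_k,\diri{k}\rangle\rvert$ is genuinely nontrivial (it survives on \emph{both} sides and is what ultimately matches). The relabeling you actually need, and which the paper invokes, is $\diri{k\addone}=-\dirr{k}$ and $\dirr{k\subone}=-\diri{k}$: this is what lets you rewrite the target-vertex cosines coming out of Eq.~\eqref{eq:psegment} (namely $\lvert\langle\mvec{n}_k,\diri{k\addone}\rangle\rvert$ and $\lvert\langle\mvec{n}_k,\dirr{k\subone}\rangle\rvert$) in the same form as the $\btf$-cosines. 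With that correction your argument is identical to the paper's.
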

\begin{proof}
In the calculation of $\hat{w}_a$ we need ratios of sampled throughputs which are:
\begin{align*}
	\frac{S(\mathcal{X}_a)}{S(\mathcal{X}_b)} &=
	\frac{\lVert\mvec{x}_{b\subone} - \mvec{x}_{b}\rVert^2}
	     {\btfi{b\subone}\cdot\btfr{b}} \cdot
	\frac{\btfi{a\subone}\cdot\btfr{a}}
	     {\lVert\mvec{x}_{a\subone} - \mvec{x}_{a}\rVert^2} \cdot
	\frac{\prod_{b\subone}^{a\subtwo} \btfi{k} / \pdfi{k}}
	     {\prod^{a}_{b\addone} \btfr{k} / \pdfr{k}}\\
	&= \frac{\lVert\mvec{x}_{b\subone} - \mvec{x}_{b}\rVert^2}
	        {\lVert\mvec{x}_{a\subone} - \mvec{x}_{a}\rVert^2} \cdot
	\frac{\prod_{b}^{a\subone} \btfi{k}}
		 {\prod^{a\subone}_{b} \btfr{k}} \cdot
	\frac{\prod^{a}_{b\addone} \pdfr{k}}
		 {\prod_{b\subone}^{a\subtwo} \pdfi{k}}\\
	&= \frac{\lVert\mvec{x}_{b\subone} - \mvec{x}_{b}\rVert^2}
		    {\lVert\mvec{x}_{a\subone} - \mvec{x}_{a}\rVert^2} \cdot
	\prod_{k=b}^{a\subone} \left\vert \frac{\langle\mvec{n}_{k},\dirr{k}\rangle}
								{\langle\mvec{n}_{k},\diri{k}\rangle} \right\vert \cdot
	\prod_{k=b}^{a\subone} \frac{\pdfr{k\addone}} {\pdfi{k\subone}}
\end{align*}
using Eq. \eqref{eq:mu}.
In step one we split the products into an BxDF and a probability part and then moved the BxDFs into the product adjusting indices.
In the second step we canceled the reciprocal BxDF arriving at a product of cosines and substituted the indices in the probabilities to unify the product range.

We get a similar transformation for Eq. \eqref{eq:propRatio} by inserting Eq. \eqref{eq:psegment}:
\begin{align*}
	\frac{p(\mathcal{X}_b)}{p(\mathcal{X}_a)} &= \left(
	\prod_{k=b}^{a\subone}
		\frac{\pdfr{k\addone} \vert\langle\mvec{n}_k,\diri{k\addone}\rangle\vert}
			 {\lVert\mvec{x}_k - \mvec{x}_{k\addone}\rVert^2}
		\frac{\lVert\mvec{x}_{k\subone} - \mvec{x}_k\rVert^2}
			 {\pdfi{k\subone} \vert\langle\mvec{n}_k,\dirr{k\subone}\rangle\vert} \right)\\
	&= \prod_{k=b}^{a\subone} \frac{\lVert\mvec{x}_{k\subone} - \mvec{x}_k\rVert^2}
								   {\lVert\mvec{x}_k - \mvec{x}_{k\addone}\rVert^2} \cdot
	\prod_{k=b}^{a\subone} \left\vert \frac{\langle\mvec{n}_k,\dirr{k}\rangle}
								{\langle\mvec{n}_{k},\diri{k}\rangle} \right\vert
								\frac{\pdfr{k\addone}}{\pdfi{k\subone}}\\
	&= \frac{\lVert\mvec{x}_{b\subone} - \mvec{x}_{b}\rVert^2}
			{\lVert\mvec{x}_{a\subone} - \mvec{x}_{a}\rVert^2} \cdot
	\prod_{k=b}^{a\subone} \left\vert \frac{\langle\mvec{n}_k,\dirr{k}\rangle}
			{\langle\mvec{n}_{k},\diri{k}\rangle} \right\vert
			\frac{\pdfr{k\addone}}{\pdfi{k\subone}}
\end{align*}
In step one we used $\diri{k+1}=-\dirr{k}$ and reordered the terms.
In the second step, the first product was expanded and equal terms were canceled out.
Comparing the last line of each ratio we see their equivalence.
This applies to unidirectional cases ($s=0$ or $t=0$) in the same way.
\end{proof}

\section{Implementation Consequences}
\label{sec:conseqences}

In practice, the terms in Eq. \eqref{eq:propRatio} are computed recursively allowing to store intermediate results in sub-paths and having higher robustness.
With this approach both weights lead to an identical implementation, because expanding the ratios leads to the same expression as shown in the proof.

Instead, path throughput weights $\hat{w}_i$ are a tool to check whether the computed weights are correct or optimal.
They show that all modifications to the estimated radiance need to be accounted for in the weight computation.
To name some of them:
\begin{description}
	\item[Russian Roulette] To increase the efficiency of a renderer it is common to randomly terminate paths.
	This termination probabilities must be included into the segment probabilities.
	\item[Shading Normal Correction] Shading normals\footnote{Smooth interpolated normals on a triangle or bump mapping modified normals}, which are different from geometric normals, need a correction factor for a reciprocal light transport (see Veach's thesis \cite{veach_robust_1997} chapter 5.3).
	It must be applied inversely to the probability densities for MIS computation, too.
	\item[Random Connection] The most used implementation of BPT traces one view-path for every light-path and connects them deterministically in all possible ways.
	However, in GPU implementations often a single random decision is made \cite{davidovic_progressive_2014}.
	Then, a connection has a probability of $1/\bar{l}$ with $\bar{l}$ being the average path length of paths which can be chosen by the process.
\end{description}

While some of these events are included in other renderers (e.g. Russian Roulette in the VCM implementation (see supplemental of \cite{georgiev_light_2012})) we have never seen that the shading normal correction is included in the MIS computation.

Using $\hat{w}$ as a tool can also help in unifying different sampling approaches.
For example, combining BPT with Vertex Merging (i.e. photon gathering) is more intuitive than for the probability based approach.
%Another point is that extending our path throughput weights to Vertex Merging (i.e. photon gathering) is more intuitively than for the probability based approach.
It took several years and multiple authors \cite{georgiev_light_2012, hachisuka_path_2012} to derive a unified weight computation which allowed the combination of BPT with merges.
In its essence, the required modification is the multiplication of the query area $\pi r^2$ to the path probability.
In the path throughput this quantity is contained naturally.
Thus, comparing connection paths with merge paths using $S$ unifies the two sampling approaches trivially.

% TODO: the following part depends on publication order
%Further, since a radiance estimate collects multiple photons for a single throughput sample, the photon density also influences the variance.
%This effect may lead to bad decisions in the MIS as shown in \cite{jendersie} and can be fixed by including the photon density.

However, there are also cases where our formulation does not help.
For example, another possibility of using random connections was used by Popov et al. \cite{popov_probabilistic_2015}.
They connected each view sub-path to multiple light sub-paths to increase the path reuse.
The problem here is the correlation between paths due to the choice of connections.
The authors derived an upper bound for the variances to obtain a more robust MIS calculation.
This problem remains equally hard regardless of the MIS formulation.
\section{Results}

We derived a new form of the \textit{Balance Heuristic} (Eq. \eqref{eq:ptiw}) which opens new perspectives to the path combination problem.
By using path throughputs, many effects, which must be included in the MIS weight, become more obvious than before.
Whenever a new quantity is multiplied with the path throughput, the path probability should be divided with this quantity.

In future the new perspective may also help in finding better combination heuristics for old and new sampling techniques such as MCMC.

%-------------------------------------------------------------------------

\bibliographystyle{alpha}
{\footnotesize
\bibliography{ptiw}}

\end{document}